\documentclass[conference]{IEEEtran}
\IEEEoverridecommandlockouts
\ifCLASSINFOpdf
\else
\fi
%
%

\usepackage[top=0.75in, bottom=0.75in, left=0.75in, right=0.75in]{geometry}
\usepackage{bbm}
\usepackage{verbatim}
\usepackage{graphicx}
\usepackage{cite}
\usepackage{url}
\usepackage[cmex10]{amsmath}
\usepackage{amssymb}
\usepackage{amsmath}
\usepackage{amsthm}
\usepackage{algorithm, algorithmicx, algpseudocode}
\usepackage[caption=false,font=footnotesize]{subfig}
\usepackage{color}
\usepackage{cite}
\usepackage{epstopdf}
\usepackage{calc}
\usepackage{array}
\usepackage{stfloats}
\usepackage{siunitx}
\usepackage{mathtools}
\usepackage{optidef}
\usepackage{gensymb}
\usepackage{xcolor}

\DeclareMathOperator*{\argmin}{argmin}

\newtheorem{thm}{Theorem}[]

\hyphenation{op-tical net-works semi-conduc-tor}

\begin{document}
%

\title{User Association in Dense mmWave Networks based on Rate Requirements}


\author{
\IEEEauthorblockN{Veljko Boljanovic\IEEEauthorrefmark{1}, Forough Yaghoubi\IEEEauthorrefmark{2}, and Danijela Cabric\IEEEauthorrefmark{1}}
\IEEEauthorblockA{\IEEEauthorrefmark{1}Electrical and Computer Engineering Department, University of California, Los Angeles, CA, USA}
\IEEEauthorblockA{\IEEEauthorrefmark{2}Communication Systems Division, KTH Royal Institute of Technology, Stockholm, Sweden\\
Email: vboljanovic@ucla.edu, forough@kth.se, danijela@ee.ucla.edu}

\thanks{This work is supported by NSF under grant 1718742.}
}


\IEEEoverridecommandlockouts


\maketitle

\begin{abstract}
Commonly considered user association frameworks in millimeter-wave communications are based on the sum rate maximization, and they essentially neglect user specific rate and service requirements. Furthermore, new features of millimeter-wave communications including spatial multiplexing, connectivity to multiple coordinated base stations and dense base station deployment are not considered. In this work, we propose a two-step optimization framework for single-shot user association in dense millimeter-wave networks which takes into account users' rate requirements, multi-connectivity, and hybrid transceiver architecture for spatial multiplexing. The proposed framework considers multiple RF chains at each base station and assigns them to different users that also have multiple RF chains for connectivity with more than one base station. In the first step, the objective of the user association is to maximize the number of users with satisfied rate requirements while minimizing network underutilization. In the second step, remaining RF chains are assigned to users, whose rate requirement has not been met, such that network sum rate is maximized. This is a novel problem formulation for user associate in millimeter-wave networks. We propose low complexity sub-optimal user association algorithms based on this formulation, numerically evaluate the optimal and sub-optimal solutions, and compare them to the conventional association approaches in terms of the number of associated users and network sum rate.
\end{abstract}


%
\IEEEpeerreviewmaketitle
%
%
\section{Introduction}
\label{sec:introduction}
Millimeter-wave (mmWave) communications will have the key role in providing high data rates in the fifth generation (5G) of cellular systems \cite{Andrews:5G}. Besides the abundant spectrum at mmWave frequencies, spatial multiplexing enabled by hybrid analog-digital transceiver architectures with multiple RF chains will be leveraged for rate improvements and connectivity management. Hybrid antennas array architecture provides flexibility for data rate increase using simultaneous beamforming (BF) and spatial mutliplexing (SM) between the base station (BS) and user equipment (UE) \cite{Sun:multiplexVSbeamform}. Thus, in dense mmWave networks with large number of the UEs, the RF chains at the BS become an important resource for connectivity optimization. Due to its energy efficiency, it is expected that hybrid architecture will be implemented at UE side as well. Having multiple RF chains at the UE will allow the UE to connect to multiple BSs at the same time, which is often referred to as multi-connectivity \cite{3GPP:multi_connect}.  This new feature of multi-connectivity paired with high data rate requirements and limited number of RF chains per UE and BS makes the problems of user association (UA) and resource allocation (RA) in mmWave networks very challenging and different from microwave frequency approaches. 

The UA/RA at microwave frequencies were extensively studied for heterogeneous networks (HetNets), e.g., in \cite{Bethanabhotla:optimal, Liu:distributed, Ye:balancing} with the objective of maximizing network utility function. The utility function is often assumed to be logarithmic, which encourages load balancing in HetNets, while the set of constraints usually includes available resources at the BS or UE. In particular, the main Quality-of-Service (QoS) constraint in \cite{Bethanabhotla:optimal, Liu:distributed, Ye:balancing} is to maintain the minimum SNR or the maximum interference level, which may not necessarily lead to an acceptable rate for the UEs. A UE with a very high SNR associated with a heavily loaded BS may not get high effective rate due to potential sharing of the BS resources among many served UEs. More recently, UA/RA in mmWave networks were studied \cite{Ghadikolaei:assoc, Alizadeh:assoc, Tatino:multi_connect, Zhang:assoc}. Similarly to the microwave HetNets, the optimization frameworks in \cite{Ghadikolaei:assoc, Alizadeh:assoc, Tatino:multi_connect, Zhang:assoc} are focused on the sum rate maximization, while the sets of constraints mainly addressed resource availability and maximum interference levels. In \cite{Alizadeh:assoc}, UA/RA problem is considered for hybrid architecture with multiple RF chains, but it does not address multi-connectivity and users' data rate requirements. On the other hand, work \cite{Zhang:assoc} considers users' data rate requirements, but it does not study architectures with multiple RF chains and multi-connectivity. Finally, without considering users' data rate requirements, work \cite{Tatino:multi_connect} studies BSs and UEs with multiple RF chains and multi-connectivity, but it puts them in the context of time scheduling and handover in mmWave networks.

In this work, we propose an optimization framework for UA/RA in dense mmWave networks. By abstracting each RF chain at BS and UE as assignable system resource,  the UA/RA can be considered as the RF-chain-wise association between multiple BSs and UEs. Unlike previous works, our proposed framework jointly considers users' data rate requirements, and multi-connectivity. Moreover, our objective function is primarily to maximize the number of UEs with satisfied rate requirements, and then to assign the remaining BS RF chains to UEs such that network underutilization is avoided. We show that our optimization problem is NP-hard, and then develop a sub-optimal algorithm to solve the problem in polynomial time.

The rest of the paper is organized as follows. In Section~\ref{sec:models}, we introduce the system and channel models. In Section~\ref{sec:problem}, we introduce the proposed UA/RA optimization framework. Section~\ref{sec:proposed} describes the proposed sub-optimal algorithm. In Section~\ref{sec:evaluation}, we compare our framework with existing UA/RA approaches. Finally, conclusions are summarized in Section~\ref{sec:conclusions}.

%
%
\section{System and Channel Models}
\label{sec:models}

We consider downlink (DL) of a standalone mmWave network with a set of BSs $\mathcal{B}$ and a set of UEs $\mathcal{U}$, operating at frequency $f$. There are $N_{\text{BS}}$ BSs in $\mathcal{B}$, and $N_{\text{UE}}$ UEs in $\mathcal{U}$. We assume sub-array hybrid architecture with $N_{\text{BS}}^{\text{RF}}$ RF chains at each BS and $N_{\text{UE}}^{\text{RF}}$ RF chains at each UE. Each RF chain at each BS controls a uniform linear array (ULA) with $N_{\text{BS}}^{\text{a}}$ antenna elements, and each RF chain at each UE has a ULA with $N_{\text{UE}}^{\text{a}}$ antenna elements. We consider each RF chain at BS as a virtual BS and define a set of all RF chains at all BSs $\mathcal{B}_v$, whose cardinality is $N_{\text{BS}}N_{\text{BS}}^{\text{RF}}$. Similarly, we define a set of all RF chains at all UEs $\mathcal{U}_v$ with cardinality $N_{\text{UE}}N_{\text{UE}}^{\text{RF}}$. There are $N_{\text{BS}}N_{\text{BS}}^{\text{RF}}N_{\text{UE}}N_{\text{UE}}^{\text{RF}}$ multiple input multiple output (MIMO) channels between RF chains from $\mathcal{B}_v$ and $\mathcal{U}_v$.

Let $i$ and $j$ be arbitrary RF chains from $\mathcal{U}_v$ and $\mathcal{B}_v$, respectively. Using the bandwidth $B$, $i$ and $j$ communicate over a single-path mmWave MIMO channel represented by matrix $\mathbf{H}_{ij}\in\mathbb{C}^{N_{\text{UE}}^{\text{a}} \times N_{\text{BS}}^{\text{a}}}$. We consider dense urban micro environment and we model the channel path loss\footnote{All RF chain pairs $(i,j)$ that correspond to the same BS-UE pair experience the same path loss because of spatial consistency.} according to \cite{3GPP:channel}. Under assumption that capacity achieving code is used, achievable data rate between $i$ and $j$ can be approximated with the link capacity. Since highly directional transmission in mmWave networks is noise-limited rather than interference-limited \cite{Sun:multiplexVSbeamform}, the link capacity $c_{ij}$ is calculated as follows
\begin{align}
    c_{ij} = B\log_2\left( 1 + \frac{|\frac{\sqrt{P_t}}{N_{\text{BS}}^{\text{RF}}} \mathbf{a}_{\text{UE}}^{\text{H}}(\hat{\theta}_i) \mathbf{H}_{ij} \mathbf{a}_{\text{BS}}(\hat{\phi}_j)|^2}{B N_0} \right),
    \label{eq:capacity}
\end{align}
where $P_t$ and $N_0$ represent the transmit power and noise spectral density, respectively. The beamforming vectors are equal to the spatial response vectors $\mathbf{a}_{\text{UE}}(\hat{\theta}_i)$ and $\mathbf{a}_{\text{BS}}(\hat{\phi}_j)$ defined as follows
\begin{align}
    \mathbf{a}_{\text{UE}}(\hat{\theta}_i) = \frac{[1,e^{-j\pi\sin(\hat{\theta}_i)},...,e^{-j(N_{\text{UE}}^{\text{a}}-1)\pi\sin(\hat{\theta}_i)}]^{\text{T}}}{\sqrt{N_{\text{UE}}^{\text{a}}}},
    \label{eq:spatialAoA}
\end{align}
\begin{align}
    \mathbf{a}_{\text{BS}}(\hat{\phi}_j) = \frac{[1,e^{-j\pi\sin(\hat{\phi}_j)},...,e^{-j(N_{\text{BS}}^{\text{a}}-1)\pi\sin(\hat{\phi}_j)}]^{\text{T}}}{\sqrt{N_{\text{BS}}^{\text{a}}}}.
    \label{eq:spatialAoD}
\end{align}
The imperfect angular estimates $\hat{\theta}_i$ and $\hat{\phi}_j$ are assumed to be obtained through practical beam training. They are modeled as Gaussian random variables $\hat{\theta}_{ij}\sim\mathcal{N}(\theta_{ij}, \sigma^2_{\text{AoA}})$ and $\hat{\phi}_{ij}\sim\mathcal{N}(\phi_{ij}, \sigma^2_{\text{AoD}})$, where $\theta_{ij}$ and $\phi_{ij}$ represent true angle of arrival (AoA) and angle of departure (AoD), respectively. The imperfect estimates $\hat{\theta}_i$ and $\hat{\phi}_j$ can negatively affect the capacity (achievable rate) between $i$ and $j$ due to decrease in beamforming gain.

%
%
\section{Optimization Framework}
\label{sec:problem}
In an urban dense environment, 3GPP specifices that each UE should get data rate of at least $300$ Mbps \cite{3GPP:system_req}. Note that UEs’ data requirements can be significantly higher and reach the order of several Gbps. Conventional UA/RA schemes often do not consider heterogeneous UEs’ data requirements, and resources are often allocated to UEs with high capacity links. In this work, we take data requirements into account, and we design a new two-step optimization framework for UA/RA, assuming that RF chains at BSs represent the resources. In the first step, we maximize the number of associated UEs with satisfied data requirements. This maximization is done with minimal number of BS RF chains to avoid network underutilization. Simultaneously, RF chains are allocated such that the sum rate among associated UEs is maximal. Since minimal amount of resources is used in the first step, in the second step the remaining RF chains are used to serve non-associated users and maximize network sum rate. 


\subsection{Step 1}
\label{sec:QoS}
Let $\mathbf{z}\in \{0,1\}^{|\mathcal{U}|}$ be a vector of binary association variables, where $z_u$ is $1$ if user $u$ is served and $0$ otherwise. Let $\mathbf{r}$ be a vector of data rate requirements $r_u$ for all $u\in\mathcal{U}$. Let $\mathbf{x} \in \{0,1\}^{|\mathcal{U}_v||\mathcal{B}_v|}$ be a vectorized matrix of binary association variable for all RF chain pairs, where $x_{ij}$ is $1$ if user RF chain $i$ is connected to BS RF chain $j$ and $0$ otherwise. Let $\mathbf{c}$ be a vector whose elements are capacities $c_{ij}$ from (\ref{eq:capacity}), associated with corresponding $x_{ij}$. We define the function $F_1$ as the number of associated UEs, and $F_2$ as the number of allocated BS RF chains with negative sign. Mathematically, $F_1$ and $F_2$ can be expressed as follows
\begin{align}
    F_1 = \sum_{u\in\mathcal{U}}z_u,~~~~~ F_2 = -\sum_{i\in\mathcal{U}_v} \sum_{j\in\mathcal{B}_v} x_{ij}.
\end{align}

The goal in Step 1 is to maximize $F_1$ with maximal $F_2$. To achieve this, a multi-criterion optimization problem needs to be solved. Since the problem is constrained on users' data requirements and available resources at BSs, it can be formulated as follows
\begin{maxi!}|s|[2]{\mathbf{z}, \mathbf{x}}{\lambda_1 F_1 + \lambda_2 F_2} {\label{eq:opt1}}{}
    \addConstraint{\sum_{i\in\mathcal{U}_v}x_{ij}}{\leq 1,~~~}{\forall j \in \mathcal{B}_v}{\label{eq:opt1_const_1}}
    \addConstraint{\sum_{j\in\mathcal{B}_v}x_{ij}}{\leq 1,~~~}{\forall i \in \mathcal{U}_v}{\label{eq:opt1_const_2}}
    \addConstraint{\sum_{i \in \mathcal{U}_v}\sum\limits_{\substack{j\in\mathcal{B}_v \\ j \rightarrow b}}x_{ij}}{\leq N_{\text{BS}}^{\text{RF}},~~~}{\forall b \in \mathcal{B}}{\label{eq:opt1_const_3}}
    \addConstraint{\sum\limits_{\substack{i\in\mathcal{U}_v \\ i \rightarrow u}}\sum_{{j\in\mathcal{B}_v}}x_{ij}}{\leq z_u N_{\text{UE}}^{\text{RF}},~~~}{\forall u \in \mathcal{U}}{\label{eq:opt1_const_4}}
    \addConstraint{\sum\limits_{\substack{i\in\mathcal{U}_v \\ i \rightarrow u}}\sum_{{j\in\mathcal{B}_v}}x_{ij}c_{ij}}{\geq z_u r_u,~~~}{\forall u \in \mathcal{U}}{\label{eq:opt1_const_5}}
    \addConstraint{x_{ij}}{\in\{0,1\},~~~}{\forall i \in \mathcal{U}_v, j \in \mathcal{B}_v}{\label{eq:opt1_const_6}}
    \addConstraint{z_{u}}{\in\{0,1\},~~~}{\forall u \in \mathcal{U},}{\label{eq:opt1_const_7}}
\end{maxi!}
where $j \rightarrow b$ means that the RF chain $j \in \mathcal{B}_v$ belongs to the BS $b \in \mathcal{B}$. Similarly,  $i \rightarrow u$ means that the RF chain $i \in \mathcal{U}_v$ belongs to the UE $u \in \mathcal{U}$. Constraints (\ref{eq:opt1_const_1}) and (\ref{eq:opt1_const_2}) guarantee that each RF chain can be connected to up to 1 RF chain. Constraints (\ref{eq:opt1_const_3}) and (\ref{eq:opt1_const_4}) relate to the maximum number of RF chains at the BS $b$ and UE $u$, respectively. The variable $z_u$ in (\ref{eq:opt1_const_4}) ensures that RF chains of UE $u$ are not used if $u$ is not associated. The data requirement constraint in (\ref{eq:opt1_const_5}) guarantees that all associated UEs have their data rate requirements satisfied. The constants $\lambda_1$ and $\lambda_2$ in (\ref{eq:opt1}a) represent weights which can be obtained through scalarization. The set of all possible values for $F_1$ and $F_2$ includes the optimal trade-off curve that is a piece-wise linear (PWL) function consisting of discrete points. It is possible to find a hyperplane defined by $[\lambda_1, \lambda_2]$ which touches the optimal trade-off curve at the point where $F_1$ is maximized and $F_2$ is maximal. Commonly, the weight $\lambda_1$ is fixed, and then the optimal values for $\lambda_2$ are found. Based on capacities of its links, a UE could need from $1$ to $N_{\text{UE}}^{\text{RF}}$ RF chains to satisfy its data rate requirement. This range defines a set of slopes of the optimal trade-off curve. They could take values from the set $\{1,\frac{1}{2},...,\frac{1}{N_{\text{UE}}^{\text{RF}}}, 0\}$. The optimal values for $\lambda_2$ directly depend on the slopes. An example for finding $\lambda_2$ when $\lambda_1=1$ and $N_{\text{UE}}^{\text{RF}}=2$ is depicted in Fig.~\ref{fig:pareto_optimal}. If $\lambda_2=1$, the number of associated users, i.e., objective $F_1$, is not maximized and there are multiple optimal points. If $\lambda_2=\frac{1}{N_{\text{UE}}^{\text{RF}}}$, there are again multiple optimal points and $F_1$ is not necessarily maximized. Further, if $\lambda_2=0$, the objective $F_1$ is maximized, but the number of used RF chains is not minimal. It can be observed that $F_1$ is maximized with minimal number of RF chains (the red point is certainly achieved) if ${\lambda_2\in \left(0,\frac{1}{N_{\text{UE}}^{\text{RF}}} \right)}$. Note that this result holds for any $N_{\text{UE}}^{\text{RF}}$ and any optimal trade-off curve. In this work, we choose ${\lambda_2 = \frac{1}{N_{\text{UE}}^{\text{RF}}+1}}$, and then ($\ref{eq:opt1}$a) gets the following form
\begin{align}
    \max_{\mathbf{z},  \mathbf{x}}~~\sum_{u\in\mathcal{U}}z_u - \sum_{i\in\mathcal{U}_v} \sum_{j\in\mathcal{B}_v} \frac{1}{N_{\text{UE}}^{\text{RF}}+1} x_{ij}.
    \label{eq:random_opt1}
\end{align}

Note that with $[\lambda_1, \lambda_2]=\left[1,\frac{1}{{N_{\text{UE}}^{\text{RF}}+1}} \right]$, the sum rate among associated UEs is not necessarily maximal since they are associated using \textit{arbitrary} links that satisfy their data requirements and maximize (\ref{eq:random_opt1}). We further extend $\lambda_2$ by adding the term $\frac{c_{ij}}{r_u}$ to its denominator and reformulate (\ref{eq:random_opt1}) as follows
\begin{align}
    \max_{\mathbf{z},  \mathbf{x}}~~\sum_{u\in\mathcal{U}}z_u - \sum_{u\in\mathcal{U}} \sum\limits_{\substack{i\in\mathcal{U}_v \\ i \rightarrow u}} \sum_{j\in\mathcal{B}_v} \frac{1}{N_{\text{UE}}^{\text{RF}}+1+\frac{c_{ij}}{r_u}} x_{ij}.
    \label{eq:extended}
\end{align}
The last expression can be considered as ($\ref{eq:opt1}\text{a}$) with a new pair of functions $F_1^{\prime}$ and $F_2^{\prime}$ and ${[\lambda_1^{\prime}, \lambda_2^{\prime}]=[1,1]}$. The formulation in (\ref{eq:extended}) ensures that the sum rate of associated UEs is maximal. The objective function increases more if UEs are associated with links that have higher capacity $c_{ij}$. If capacity $c_{ij}$ was not divided by corresponding $r_u$, minimal number of used RF chains would not be guaranteed. To see this, assume $u_1$ and $u_2$ are two UEs from $\mathcal{U}$, where $u_1$ experience only low capacity links with all BSs, but it can satisfy its low data rate requirement with one RF chain, and $u_2$ that has high capacity links with all BSs, but it needs two RF chains to satisfy its extremely high rate requirement. If capacity $c_{ij}$ was not divided by corresponding $r_u$, high capacity user $u_2$ would be favored even though it requires more RF chains. In other words, the objective could see higher reward in associating $u_2$ than $u_1$. To solve this issue, we introduce the relative capacity term $\frac{c_{ij}}{r_u}$ which ensures that the sum rate is maximal among UEs associated with minimal number of RF chains. With the relative capacity term, the objective function sees higher reward if UEs are associated with smaller number of RF chains. 
Finally, the optimization problem in Step 1 can be restated as follows
\begin{figure}
    \begin{center}
        \includegraphics[width=0.48\textwidth]{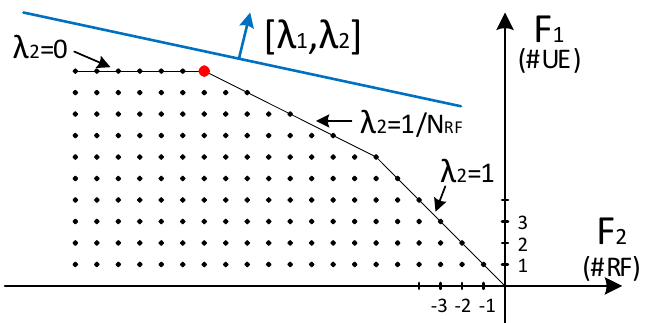}
    \end{center}
    \vspace{-4mm}
    \caption{PWL trade-off curve with discrete points. We fix $\lambda_1=1$ and find the values of $\lambda_2$ for which the red optimal point is certainly achieved.}
    \vspace{-4mm}
    \label{fig:pareto_optimal}
\end{figure}
\begin{maxi}|s|[2]{\mathbf{z}, \mathbf{x}}{\sum_{u\in\mathcal{U}}z_u - \sum_{u\in\mathcal{U}} \sum\limits_{\substack{i\in\mathcal{U}_v \\ i \rightarrow u}} \sum_{j\in\mathcal{B}_v} \frac{1}{N_{\text{UE}}^{\text{RF}}+1+\frac{c_{ij}}{r_u}} x_{ij}} {\label{eq:opt2}}{}
    \addConstraint{\text{(\ref{eq:opt1_const_1})-(\ref{eq:opt1_const_7})}}{}{}
\end{maxi}

In summary, Step 1 in the optimization framework maximizes the number of associated UEs with satisfied data rate requirements using minimal number of RF chains at BSs. In addition, it ensures maximal sum rate among associated UEs when minimal resources are used.

\subsection{Step 2}
After Step 1, some number of RF chains at BSs might still be available. Since no more UEs can satisfy their data requirements, the reminder of resources at BSs is fully exploited and distributed among non-associated UEs using max-sum-rate scheme. Note that the proposed framework reduces to the sum rate maximization among all users if no users are associated in Step 1.

Let $\mathcal{U}=\mathcal{U}^{\text{A}} \cup \mathcal{U}^{\text{NA}}$, where $\mathcal{U}^{\text{A}}$ and $\mathcal{U}^{\text{NA}}$ are sets of associated and non-associated UEs in Step 1, respectively. Now let $\mathcal{U}_v^{\text{NA}}$ be a set of all RF chains at non-associated UEs. Let $\mathcal{B}_v=\mathcal{B}_v^{\text{A}} \cup \mathcal{B}_v^{\text{NA}}$, where $\mathcal{B}_v^{\text{A}}$ and $\mathcal{B}_v^{\text{NA}}$ are sets of assigned and non-assigned RF chains from BSs in Step 1, respectively. Let $N_{\text{BS}(b)}^{\text{RF}}$ be the number of remaining RF chains at the BS $b\in\mathcal{B}$ after Step 1. Similarly as in Step 1, let $\mathbf{x}_s$ be a vector of binary association variables $x_{ij}$ for all RF chain pairs $\{i,j\},~i \in \mathcal{U}_v^{\text{NA}}, j \in \mathcal{B}_v^{\text{NA}}$. Note that the vector $\mathbf{x}_s$ consists of the subset of variables from the vector $\mathbf{x}$. Let $c_{ij}$ be the capacity associated with corresponding variable $x_{ij}$ from $\mathbf{x}_s$. The sum rate can be maximized as follows
\begin{maxi!}|s|[2]{\mathbf{x}_s}{\sum_{i\in\mathcal{U}_v^{\text{NA}}} \sum_{j\in\mathcal{B}_v^{\text{NA}}} x_{ij}c_{ij}} {\label{eq:sum_rate}}{}
    \addConstraint{\sum_{i\in\mathcal{U}_v^{\text{NA}}}x_{ij}}{\leq 1,~~~}{\forall j \in \mathcal{B}_v^{\text{NA}}}{\label{eq:sum_rate_const_1}}
    \addConstraint{\sum_{j\in\mathcal{B}_v^{\text{NA}}}x_{ij}}{\leq 1,~~~}{\forall i \in \mathcal{U}_v^{\text{NA}}}{\label{eq:sum_rate_const_2}}
    \addConstraint{\sum_{i \in \mathcal{U}_v^{\text{NA}}}\sum\limits_{\substack{j\in\mathcal{B}_v^{\text{NA}} \\ j \rightarrow b}}x_{ij}}{\leq N_{\text{BS}(b)}^{\text{RF}},~~~}{\forall b \in \mathcal{B}}{\label{eq:sum_rate_const_3}}
    \addConstraint{\sum\limits_{\substack{i\in\mathcal{U}_v^{\text{NA}} \\ i \rightarrow u}}\sum_{{j\in\mathcal{B}_v^{\text{NA}}}}x_{ij}}{\leq N_{\text{UE}}^{\text{RF}},~~~}{\forall u \in \mathcal{U}^{\text{NA}}}{\label{eq:sum_rate_const_4}}
    \addConstraint{x_{ij}}{\in\{0,1\},~~~}{\forall i \in \mathcal{U}_v^{\text{NA}}, j \in \mathcal{B}_v^{\text{NA}}}{\label{eq:sum_rate_const_5}}.
\end{maxi!}
As in Step 1, constraints (\ref{eq:sum_rate_const_1}) and (\ref{eq:sum_rate_const_2}) guarantee that each RF chain can be connected to up to 1 RF chain. Constraints (\ref{eq:sum_rate_const_3}) and (\ref{eq:sum_rate_const_4}) ensure that the number of used RF chains does not exceed the amount of available resources at the BS and non-associated UE, respectively. The formulation in (\ref{eq:sum_rate}) maximizes the sum rate in Step 2 regardless of user data rate requirements, but other approaches are also possible. For example, the association scheme in Step 2 can be designed to distribute remaining resources among non-associated UEs according to their data rate requirements. With this approach, UEs with higher data requirements would get more resources. We leave these alternative designs for future work.

%
%
\section{Proposed Algorithm}
\label{sec:proposed}
The optimization problems in (\ref{eq:opt2}) and (\ref{eq:sum_rate}) are Binary Integer Programs (BIP), which are known to be NP-hard. This means that even small-size problems with a few BSs and moderate number of UEs have prohibitive computational complexity. To solve these optimization problems, we propose a low complexity solution based on relaxation and rounding. First, we relax the BIP in (\ref{eq:opt2}) and formulate the following low complexity Linear Program (LP)
\begin{maxi}|s|[2]{\mathbf{z}, \mathbf{x}}{\sum_{u\in\mathcal{U}}z_u - \sum_{u\in\mathcal{U}} \sum\limits_{\substack{i\in\mathcal{U}_v \\ i \rightarrow u}} \sum_{j\in\mathcal{B}_v} \frac{1}{N_{\text{UE}}^{\text{RF}}+1+\frac{c_{ij}}{r_u}} x_{ij}} {\label{eq:opt2_relax}}{}
    \addConstraint{\text{(\ref{eq:opt1_const_1})-(\ref{eq:opt1_const_5})}}{}
    \addConstraint{0 \leq x_{ij}}{\leq 1,~~~}{\forall i \in \mathcal{U}_v, j \in \mathcal{B}_v}{}
    \addConstraint{0 \leq z_{u}}{\leq 1,~~~}{\forall u \in \mathcal{U},}{}
\end{maxi}
\begin{algorithm}
\caption{Proposed rounding algorithm}
\label{algorithm}
\begin{algorithmic}[1]
\State $\textbf{Inputs:} ~~\mathbf{C}_{\mathcal{S}}, ~\mathbf{r}$ 
\State $\textbf{Outputs:} ~~\mathbf{z}_{ro}, ~\mathbf{x}_{ro}$
\State $\textbf{Initialization:} ~~\mathbf{z}_{ro}=\mathbf{0}, ~\mathbf{x}_{ro}=\mathbf{0}$
\State $\mathbf{X}_{ro} = \text{vec}^{-1}(\mathbf{x}_{ro})$
\State Define matrix of values $\mathbf{V}$ and matrix of indices $\mathbf{I}$
\For {$n=1:N_{\text{UE}}^{\text{RF}}$}
    \While {$true$}
        \State $\mathbf{d}=\mathbf{0}$
        \State $[\mathbf{V}[:,u],~\mathbf{I}[:,u]] = \text{sort} \left( \mathbf{C}_{\mathcal{S}} [:,u],~'\text{descend}' \right)$
        \State $d_u = \argmin_k \left( \sum_{m=1}^{k} \mathbf{V} [m,u] \geq r_u \right), \forall u$
        \If {$d_u \neq n,~\forall u$}
            \State \textbf{break while}
        \EndIf
        \State $\mathbf{z}_{ro}[u^*]=1$, for $[\sim,u^*] = \max \left( \sum_{m=1}^{n} \mathbf{V} [m,:] \right)$
        \State $\mathbf{X}_{ro}[\mathbf{I}[1:n,u^*], u^*]=1$
        \State $\mathbf{C}_{\mathcal{S}} [:,u^*] = \mathbf{0}$
        \State $\mathbf{C}_{\mathcal{S}} [k,:] = \mathbf{0}^{\text{T}}$, $\forall k$ of assigned BS RF chains
    \EndWhile
\EndFor
\State $\mathbf{x}_{ro} = \text{vec}\left(\mathbf{X}_{ro} \right)$
\end{algorithmic}
\end{algorithm}

The solution $\mathbf{x}^*$ to (\ref{eq:opt2_relax}) is fractional, meaning that its elements are not necessarily integer values $0$ or $1$. Since we consider a single-shot UA and not the long-term average, elements in $\mathbf{x}^*$ should be rounded either to $0$ or $1$. Using simple rounding technique where values $x_{ij}^* \geq 0.5$ are rounded to $1$, and zero otherwise, is not a good way to find the rounded solution $\mathbf{x}_{ro}$. This technique can round too many elements to $1$, and thus $\mathbf{x}_{ro}$ can violate multiple constraints and become infeasible. Similarly, it can round too many elements to $0$, and then $\mathbf{x}_{ro}$ becomes a poor solution with few associated UEs. We propose a sub-optimal polynomial time rounding algorithm used to obtain $\mathbf{x}_{ro}$ from $\mathbf{x}^*$.

Let $\mathcal{S}$ be the support of the vector $\mathbf{x}^*$. Let vector $\mathbf{c}_{\mathcal{S}}$ be equal to the vector $\mathbf{c}$ for indices found in $\mathcal{S}$, and zero for all other indices. Let $\mathbf{X}^*=\text{vec}^{-1}(\mathbf{x}^*)$, where $\text{vec}^{-1}()$ reshapes a vector of size $\mathbb{R}^{N_{\text{BS}}N_{\text{BS}}^{\text{RF}}N_{\text{UE}}N_{\text{UE}}^{\text{RF}}}$ into a matrix of size $\mathbb{R}^{N_{\text{BS}}N_{\text{BS}}^{\text{RF}}N_{\text{UE}}^{\text{RF}}\times N_{\text{UE}}}$. Similarly, let $\mathbf{C}_{\mathcal{S}}=\text{vec}^{-1}(\mathbf{c}_{\mathcal{S}})$ be a matrix of capacities on used links. Columns in $\mathbf{X}^*$  and $\mathbf{C}_{\mathcal{S}}$ correspond to different UEs, and rows correspond to all possible links that UEs can have.

We first compare elements $\mathbf{C}_{\mathcal{S}}$ to the corresponding data rate requirements $r_u,~u=1,...,N_{\text{UE}}$, to see how many RF chains each associated UE needs. The needed number of RF chains is stored in the demand vector $\mathbf{d}$. We identify all UEs that can be associated using 1 RF chain, i.e., we find all positions in $\mathbf{d}$ where $d_u=1$. Among these UEs, we find the one that has link with the highest capacity and associate the UE using this link. To exclude the associated UE from further consideration, its corresponding column in $\mathbf{C}_{\mathcal{S}}$ is set to $\mathbf{0}$. Similarly, we exclude the used BS RF chain from further consideration by setting all corresponding rows in $\mathbf{C}_{\mathcal{S}}$ to $\mathbf{0}^{\text{T}}$. In the next iteration, the demand vector $\mathbf{d}$ is determined again, and the whole procedure is repeated for $d_u=1$ if there are still UEs that require 1 RF chain. If that is not the case, the procedure first repeats for $d_u=2$, then for $d_u=3$, and so on until $d_u=N_{\text{UE}}^{\text{RF}}$. Note that when $d_u>1$, we pick the UE with the highest aggregated capacity on $d_u$ best links. The algorithm pseudo-code is provided in Algorithm \ref{algorithm}.

The sub-optimality of the proposed algorithm comes from the fact that it does not maximize the sum rate of associated UEs. In fact, this is a greedy approach which tries to maximize the number of associated UEs with minimal number of RF chains by choosing the UE with best links in each iteration. The greedy selection of best links often comes at price of lower number of associated UEs. Consequently, the proposed solution leaves more RF chains for the second step, where the sum rate is maximized.

Once vectors $\mathbf{z}_{ro}$ and $\mathbf{x}_{ro}$ are obtained, the optimal $\mathbf{x}_s^*$ in Step 2 can be obtained by relaxing and solving (\ref{eq:sum_rate}).
\begin{thm}
The relaxation of (\ref{eq:sum_rate}) has an integral optimal solution $\mathbf{x}_s^*$, whose elements are 0 or 1.
\label{th:1}
\end{thm}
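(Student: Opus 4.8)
The plan is to prove integrality by showing that the feasible polytope of the relaxed program (\ref{eq:sum_rate}) has only integral vertices; since the objective is linear and the polytope is nonempty and bounded (every variable lies in $[0,1]$), an optimal solution is attained at a vertex, and that vertex is then integral. The crux is to recognize the constraint matrix as totally unimodular (TU). I would first simplify the system by observing that two of its four inequality families are redundant, reducing the program to a bare fractional bipartite matching LP, whose incidence constraint matrix is the textbook example of a TU matrix.

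First I would argue that constraints (\ref{eq:sum_rate_const_3}) and (\ref{eq:sum_rate_const_4}) are implied by (\ref{eq:sum_rate_const_1}) and (\ref{eq:sum_rate_const_2}). For any BS $b$, summing (\ref{eq:sum_rate_const_1}) over the chains $j\in\mathcal{B}_v^{\text{NA}}$ with $j\rightarrow b$ gives $\sum_{i}\sum_{j\rightarrow b} x_{ij}\le |\{j\in\mathcal{B}_v^{\text{NA}} : j\rightarrow b\}|$. By definition the right-hand side equals $N_{\text{BS}(b)}^{\text{RF}}$, which is exactly the bound in (\ref{eq:sum_rate_const_3}); hence (\ref{eq:sum_rate_const_3}) is automatically satisfied. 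Symmetrically, since every non-associated UE $u\in\mathcal{U}^{\text{NA}}$ has $z_u=0$ and therefore, by (\ref{eq:opt1_const_4}), none of its chains were used in Step 1, all $N_{\text{UE}}^{\text{RF}}$ of its chains belong to $\mathcal{U}_v^{\text{NA}}$; summing (\ref{eq:sum_rate_const_2}) over those chains reproduces (\ref{eq:sum_rate_const_4}). Thus both aggregate capacity families can be dropped without changing the feasible region.

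What remains is the LP with variables $x_{ij}\in[0,1]$ subject only to (\ref{eq:sum_rate_const_1}) and (\ref{eq:sum_rate_const_2}), i.e. the fractional-matching polytope of the bipartite graph whose two vertex classes are the free UE chains $\mathcal{U}_v^{\text{NA}}$ and the free BS chains $\mathcal{B}_v^{\text{NA}}$ and whose edges are the candidate links. Its constraint matrix is precisely the node--edge incidence matrix of a bipartite graph: each column, corresponding to a variable $x_{ij}$, carries a single $+1$ in the UE-chain row $i$ and a single $+1$ in the BS-chain row $j$. Partitioning the rows into the UE-chain set and the BS-chain set then satisfies the signing criterion for total unimodularity, so the matrix is TU. I would invoke the Hoffman--Kruskal theorem: because the matrix is TU and the right-hand side (all ones, together with the unit upper bounds) is integral, every vertex of the polytope has integral coordinates.

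Putting these together, I would conclude that a vertex optimizer $\mathbf{x}_s^*$ of the relaxation exists and has all entries in $\{0,1\}$, which proves the claim. I expect the only real obstacle to be the bookkeeping in the first step---verifying that the cardinalities of the restricted index sets are exactly $N_{\text{BS}(b)}^{\text{RF}}$ and $N_{\text{UE}}^{\text{RF}}$, so that the aggregate capacity rows are genuinely redundant; the total-unimodularity step itself is standard. As an alternative to eliminating the redundant rows, one could verify TU of the full four-family matrix directly via the same row-signing criterion, but the reduction above is cleaner.
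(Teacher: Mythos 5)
Your proof is correct, but it takes a genuinely different route from the paper's. The paper proves Theorem~\ref{th:1} by transforming the relaxation of (\ref{eq:sum_rate}) into a min-cost network flow problem on a five-layer directed graph (source -- BSs -- BS RF chains -- UE RF chains -- UEs -- sink), encoding (\ref{eq:sum_rate_const_3}) and (\ref{eq:sum_rate_const_4}) as integer capacities on the source-side and sink-side edges, and then citing the classical integrality theorem for min-cost flows with integer edge capacities and vertex supplies/demands. You instead observe that (\ref{eq:sum_rate_const_3}) and (\ref{eq:sum_rate_const_4}) are redundant --- because $|\{j\in\mathcal{B}_v^{\text{NA}}: j\rightarrow b\}|=N_{\text{BS}(b)}^{\text{RF}}$ by definition of the leftover chains, and every RF chain of a non-associated UE survives Step 1 by constraint (\ref{eq:opt1_const_4}) --- which collapses the LP to a fractional bipartite matching polytope between $\mathcal{U}_v^{\text{NA}}$ and $\mathcal{B}_v^{\text{NA}}$, whose node--edge incidence matrix is totally unimodular; you then finish with Hoffman--Kruskal and the existence of an optimal vertex of a nonempty bounded polytope. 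Both arguments ultimately rest on the same combinatorial fact (network/incidence matrices are totally unimodular), but yours is leaner: it dispenses with the auxiliary graph construction entirely and makes explicit that the aggregate per-BS and per-UE budget constraints carry no information beyond the per-chain constraints (\ref{eq:sum_rate_const_1})--(\ref{eq:sum_rate_const_2}). The paper's flow formulation would generalize more gracefully if one later imposed per-BS or per-UE budgets strictly smaller than the number of available chains; in that case your redundancy step would fail and you would need to verify total unimodularity of the full four-family system directly, as you note in your closing remark. Your cardinality bookkeeping is correct as stated.
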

\begin{proof}
See Appendix~\ref{app:integer}.
\end{proof}

\vspace{1mm}
\section{Numerical Evaluation}
\label{sec:evaluation}
In this section, we evaluate the proposed low complexity algorithm for new UA/RA framework, and compare it to existing association schemes, including the max-sum-rate and max-SNR. Neiher the max-sum-rate nor max-SNR consider users' data requirements and they both allocate BS RF chains to the UEs with high capacity links. The max-sum-rate jointly considers all RF chains at all BSs to maximize the network sum rate, while the max-SNR considers resources at one BS at a time and thus allocates them sequentially.

We consider a scenario with $N_{\text{BS}}=5$ BSs and $N_{\text{UE}}=30$ UEs randomly placed within the area among BSs. The distance between neighboring BSs is $200\text{m}$. We assume hybrid sub-array architecture at both BSs and UEs, with $N_{\text{BS}}^{\text{RF}}=5$ RF chains at each BS and $N_{\text{UE}}^{\text{RF}}=2$ RF chains at each UE. Each BS RF chain controls $N_{\text{BS}}^{\text{a}}=32$ antennas, while each UE RF chain controls $N_{\text{UE}}=8$ antennas. The data rate requirements are drawn from uniform distribution $r_u\sim U(R_{\text{min}}, R_{\text{max}}),~\forall u$, where $R_{\text{min}}=0.3~\text{Gbps}$ and $R_{\text{max}}$ can vary. We consider the downlink communication with BS transmit power $P_t=30~\text{dBm}$, and noise spectral density of $N_0=-174~\frac{\text{dBm}}{\text{Hz}}$. All RF chain pairs $\{i,j\},~i\in\mathcal{U}_v, j\in\mathcal{B}_v$, operate over the same bandwidth of $B=200~\text{MHz}$. True angles $\theta_{i}$ and $\phi_{j}$ are drawn from uniform distribution $U(-\frac{\pi}{2},\frac{\pi}{2})$ for each RF chain pair $(i,j)$, and standard deviations of estimation errors are $\sigma_{\text{AoD}}=1^{\circ}$ and $\sigma_{\text{AoA}}=3^{\circ}$ for AoD and AoA, respectively.
\begin{figure}[t]
\begin{tabular}{cc}
\vspace{1mm}
\subfloat[User association based on optimal solutions to (\ref{eq:opt2}) and (\ref{eq:sum_rate}).]{%
  \includegraphics[clip,width=0.95\columnwidth]{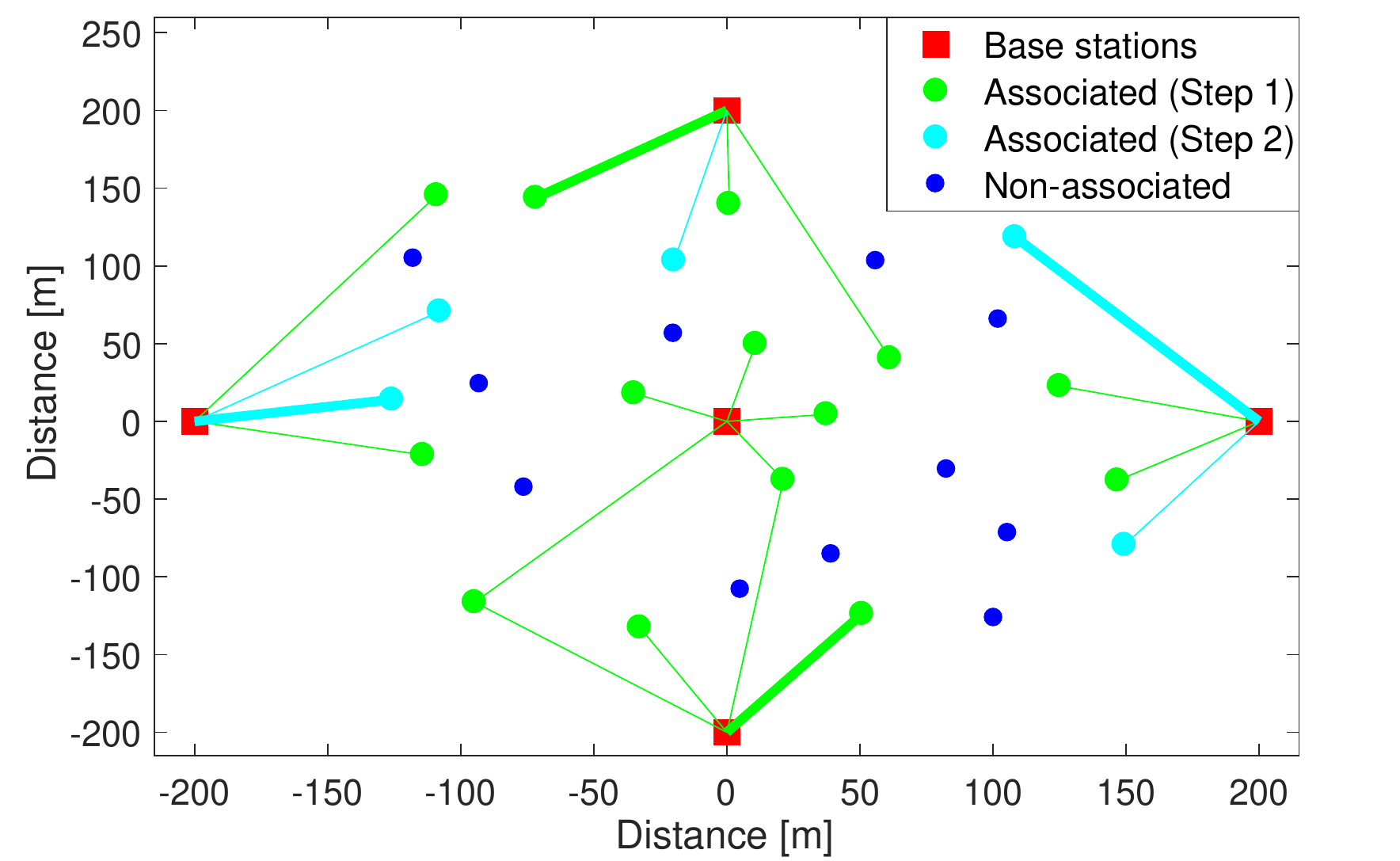}%
}\\
\vspace{-3mm}
\subfloat[User association based on proposed solutions to (\ref{eq:opt2}) and (\ref{eq:sum_rate}).]{%
  \includegraphics[clip,width=0.95\columnwidth]{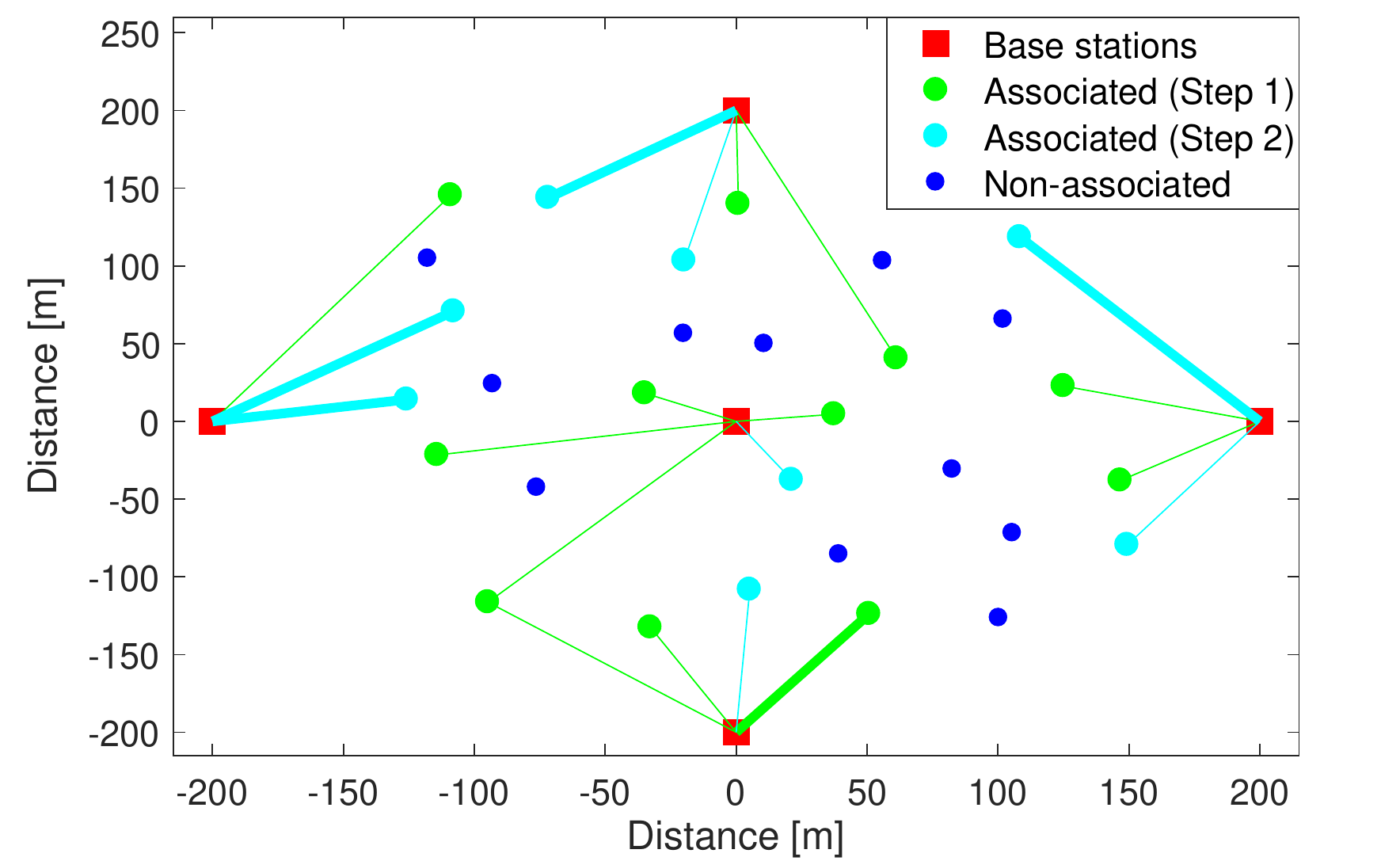}%
}\\
\vspace{0mm}
\end{tabular}
\caption{User association for one scenario realization. Thin and thick lines illustrate the use of 1 and 2 RF chains for communication, respectively.}
\vspace{-5mm}
\label{fig:opt_vs_proposed}
\end{figure}

The UA/RA in the proposed two-step optimization framework is presented in Fig.~\ref{fig:opt_vs_proposed} for one scenario realization, with ${R_{\text{max}}=2~\text{Gbps}}$. The two subfigures compare the associations based on the optimal and proposed solutions to (\ref{eq:opt2}) and (\ref{eq:sum_rate}). Association based on the optimal solutions maximizes the number of UEs with satisfied data rate requirement in Step 1, and it leaves small number of RF chains for Step 2. On the other hand, the UA based on the proposed sub-optimal solutions associates less users with satisfied data requirements in Step 1 due to its greedy nature. This means that more RF chains are available in Step 2, where they are given to non-associated UEs through the max-sum-rate approach formulated in (\ref{eq:sum_rate}). When hybrid transceiver architecture is considered, a BS tends to allocate multiple RF chains for a single UE with good links in Step 2. This explains why many UEs are provided with two RF chains in Step 2 in Fig.~\ref{fig:opt_vs_proposed}(b).

In Fig.~\ref{fig:comparison}, the proposed UA/RA framework is compared with existing UA schemes, including max-sum-rate and max-SNR, for the same scenario realization as in Fig.~\ref{fig:opt_vs_proposed}. As expected, the UA/RA based on the optimal solutions to (\ref{eq:opt2}) and (\ref{eq:sum_rate}) results in the highest number of associated UEs with satisfied data rate requirements. The proposed sub-optimal UA/RA performs better than conventional association schemes in terms of the number of associated users. When conventional schemes are used for association, only UEs with good links are provided with the opportunity to satisfy their data rate requirements regardless of how low or high their requirements are. Unsurprisingly, the max-sum-rate scheme achieves the highest network sum rate. The sum rate in the proposed framework is higher with the sub-optimal UA/RA than with the optimal. The main reasons for this are sub-optimal maximization with the proposed solutions in Step 1 and consequent use of more RF chains in Step 2 where sum rate is maximized among non-associated UEs.
\begin{figure}
    \begin{center}
        \includegraphics[width=0.48\textwidth]{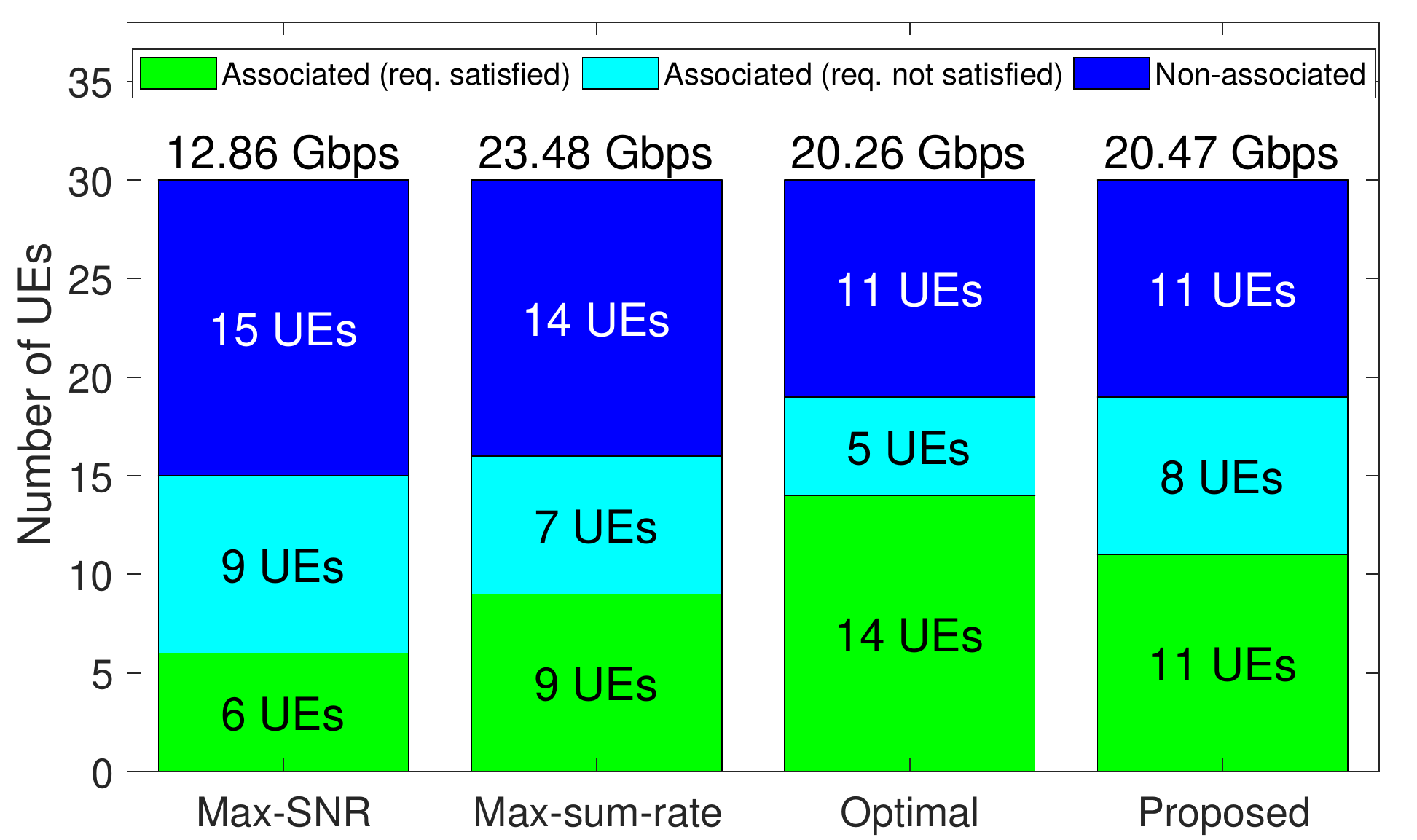}
    \end{center}
    \vspace{-4mm}
    \caption{Comparison in terms of the number of associated UEs and achieved network sum rate for one scenario realization. The associated UEs can have their data requirements satisfied or not.}
    \vspace{-6mm}
    \label{fig:comparison}
\end{figure}

In Fig.~\ref{fig:mc_comparison}, the average number of associated UEs with satisfied data rate requirements and the average achieved sum rate are presented as functions of the maximum requirement $R_{\text{max}}$. To obtain results in both subfigures, we perform 30 Monte Carlo runs with different positions of UEs and users' data requirements $r_u,~\forall u$, to find the averages for different $R_{\text{max}}$. By jointly considering both subfigures, we see that the UA/RA based on the proposed solutions represents a trade-off between max-sum-rate scheme and UA/RA based on the optimal solutions. When the proposed solutions are used, the number of associated UEs with satisfied data requirements is higher than with the max-sum-rate scheme, and lower than with optimal solutions. On the other hand, the use of the proposed solutions results in higher sum rate than with the optimal solutions, but lower than with the max-sum-rate scheme. As the number of associated UEs in Step 1 of the proposed framework decreases in Fig.~\ref{fig:mc_comparison}(a), the sum rate increases in Fig.~\ref{fig:mc_comparison}(b) since more BS RF chains are available for rate maximization in Step 2. Note that the sum rates for the max-sum-rate and max-SNR schemes in Fig.~\ref{fig:mc_comparison}(b) do not increase because these schemes associate UEs with good links in a single step, regardless of UEs' data requirements.

%
%
\section{Conclusions}
\label{sec:conclusions}
We proposed a new UA/RA framework which, unlike existing UA/RA approaches, considers features of mmWave networks and diverse user data rate requirements. The first step in the framework maximizes the number of associated UEs with satisfied data rate requirements using minimal amount of resources, while the second step uses the remaining resources to maximize the sum rate among non-associated UEs. We proposed sub-optimal solutions to the NP-hard problems in both steps, and our numerical results showed that the proposed solutions represent a good trade-off between the optimal solutions to the NP-hard problems and existing max-sum-rate approach.
\begin{figure}[t]
\begin{tabular}{cc}
\vspace{1mm}
\subfloat[Average number of associated UEs with satisfied QoS requirements.]{%
  \includegraphics[clip,width=0.95\columnwidth]{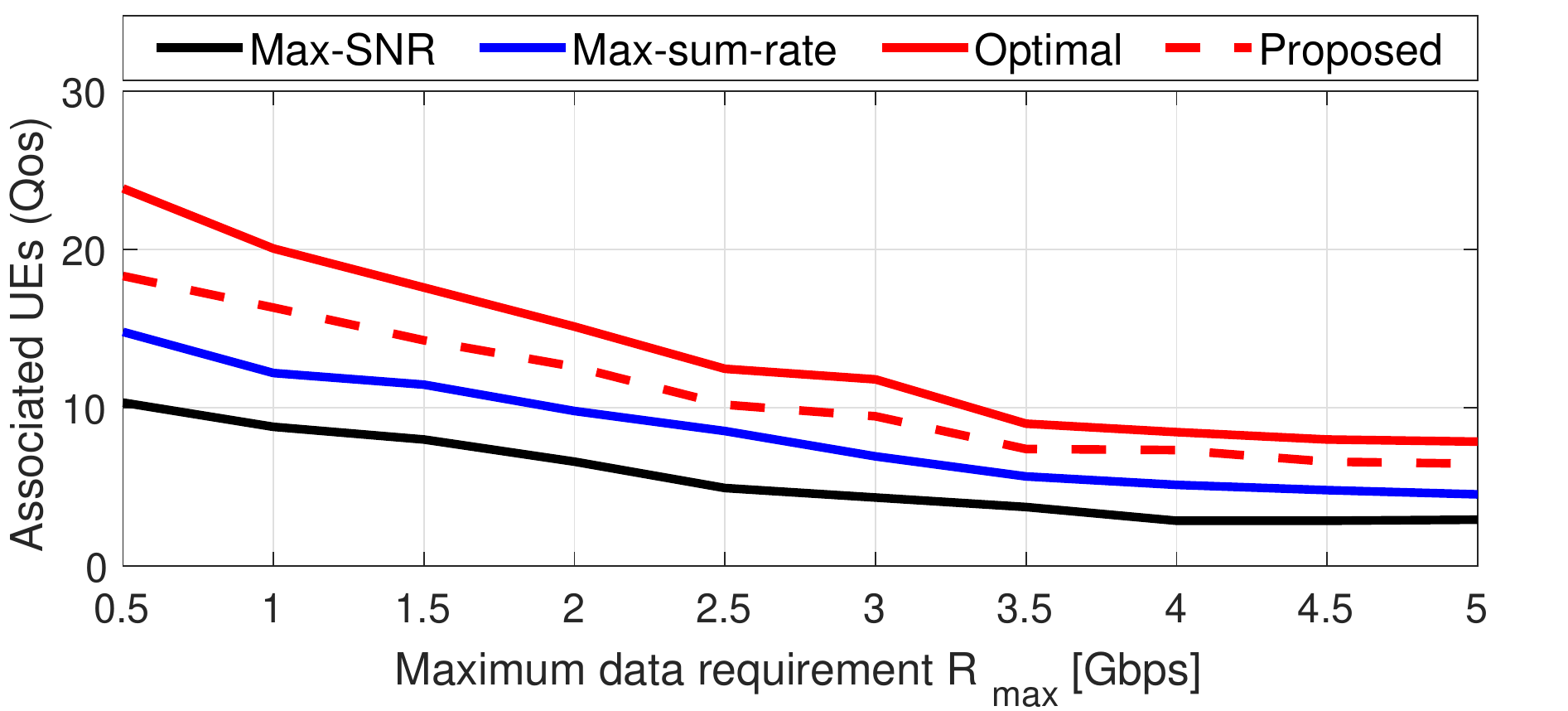}%
}\\
\vspace{-3mm}
\subfloat[Average sum rate in the network.]{%
  \includegraphics[clip,width=0.95\columnwidth]{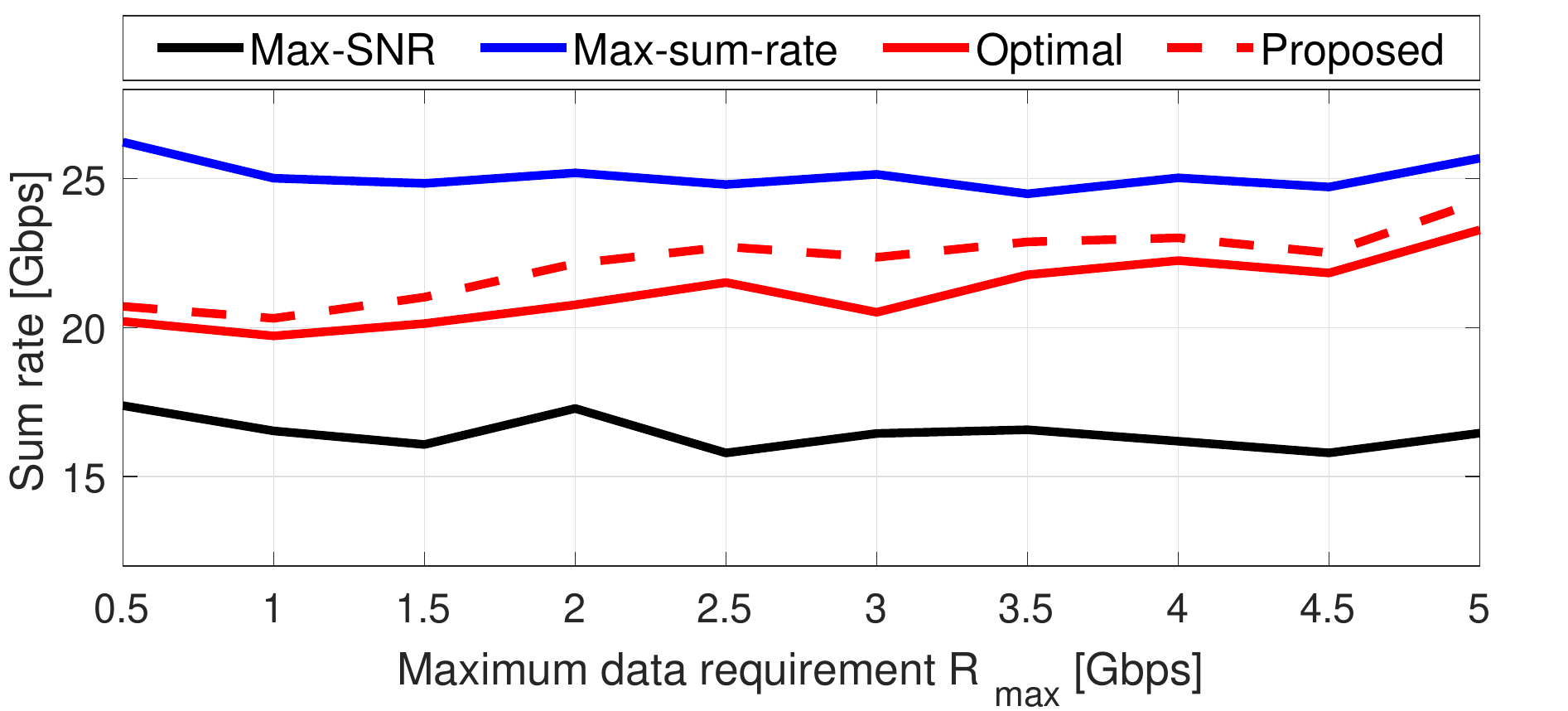}%
}\\
\vspace{0mm}
\end{tabular}
\caption{Comparison in terms of the number of associated UEs with satisfied data requirements and average network sum rate.}
\vspace{-5mm}
\label{fig:mc_comparison}
\end{figure}

%
%
\appendix
\subsection{Integer solution to relaxed problem}
\label{app:integer}

The relaxation of ($\ref{eq:sum_rate}$) can be transformed into a min-cost network flow problem with integer edge capacities and integer vertex supplies/demands.

Let $G=(V,E)$ be a directed graph (network flow), with the source $s$ and the sink $t$. The set of vertices $V$ includes the source $s$, base stations $b$ from $\mathcal{B}$, BS RF chains $j$ from $\mathcal{B}^{\text{NA}}_v$, UE RF chains $i$ from $\mathcal{U}^{\text{NA}}_v$, UEs $u$ from $\mathcal{U}^{\text{NA}}$, and sink $t$. The set $E$ includes edges between:
\begin{itemize}
    \item the source $s$ and the BS $b,~\forall b\in\mathcal{B}$, with edge capacity $c(s,b)=N_{\text{BS}(b)}^{\text{RF}}$ and edge cost $\omega(s,b)=0$,
    \item the BS $b,~\forall b\in\mathcal{B}$, and BS RF chain $j,~\forall j\in\mathcal{B}^{\text{NA}}_v$, with edge capacity $c(b,j)=1$, if $j\rightarrow b$, and $c(b,j)=0$ otherwise, and edge cost $\omega(b,j)=0$,
    \item the BS RF chain $j,~\forall j\in\mathcal{B}^{\text{NA}}_v$, and UE RF chain $i,~\forall i\in\mathcal{U}^{\text{NA}}_v$, with edge capacity $c(j,i)=1$, and edge cost $\omega(j,i)=\frac{1}{1+c_{ij}}$, where $c_{ij}$ are capacities from (\ref{eq:capacity}),
    \item the UE RF chain $i,~\forall i\in\mathcal{U}^{\text{NA}}_v$, and UE $u,~\forall u\in\mathcal{U}$, with edge capacity $c(i,u)=1$, if $i\rightarrow u$, and $c(i,u)=0$ otherwise, and edge cost $\omega(i,u)=0$,
    \item the UE $u,~\forall u\in\mathcal{U}$, and sink $t$, with edge capacity $c(u,t)=N_{\text{UE}}^{\text{RF}}$ and edge cost $\omega(u,t)=0$.
\end{itemize}
The graph $G$ is depicted in Fig.~\ref{fig:graph}.

Let $(m,n)$ be an edge between vertices $m,n\in V$. Let $x(m,n)$ be the flow over the edge $(m,n)$. Let $E'$ be the set of all vertices excluding $s$ and $t$, i.e., $V'=V\setminus \{s,t\}$. The min-cost network flow problem, which is equivalent to the relaxation of (\ref{eq:sum_rate}), can be stated as follows
\begin{maxi!}|s|[3]{(m,n)\in E}{\sum_{(m,n)\in E}x(m,n)\omega(m,n)} {\label{eq:min_cost}}{}
    \addConstraint{\sum_{n:(m,n)\in E}x(m,n)-\sum_{n:(n,m)\in E}x(n,m)}{= 0,~~~\forall m \in V'}{}{\label{eq:min_cost_const_1}}
    \addConstraint{\sum_{n:(s,n)\in E}x(s,n)-\sum_{n:(n,s)\in E}x(n,s)}{= \sum_{b\in\mathcal{B}}N_{\text{RF}(b)}^{\text{BS}}}{}{\label{eq:min_cost_const_2}}
    \addConstraint{\sum_{n:(t,n)\in E}x(t,n)-\sum_{n:(n,t)\in E}x(n,t)}{= -\sum_{b\in\mathcal{B}}N_{\text{RF}(b)}^{\text{BS}}}{}{\label{eq:min_cost_const_3}}
    \addConstraint{0 \leq x(m,n)}{ \leq c(m,n),~~~\forall (m,n)\in E}{}{\label{eq:min_cost_const_4}}
\end{maxi!}
\begin{figure}
    \begin{center}
        \includegraphics[width=0.48\textwidth]{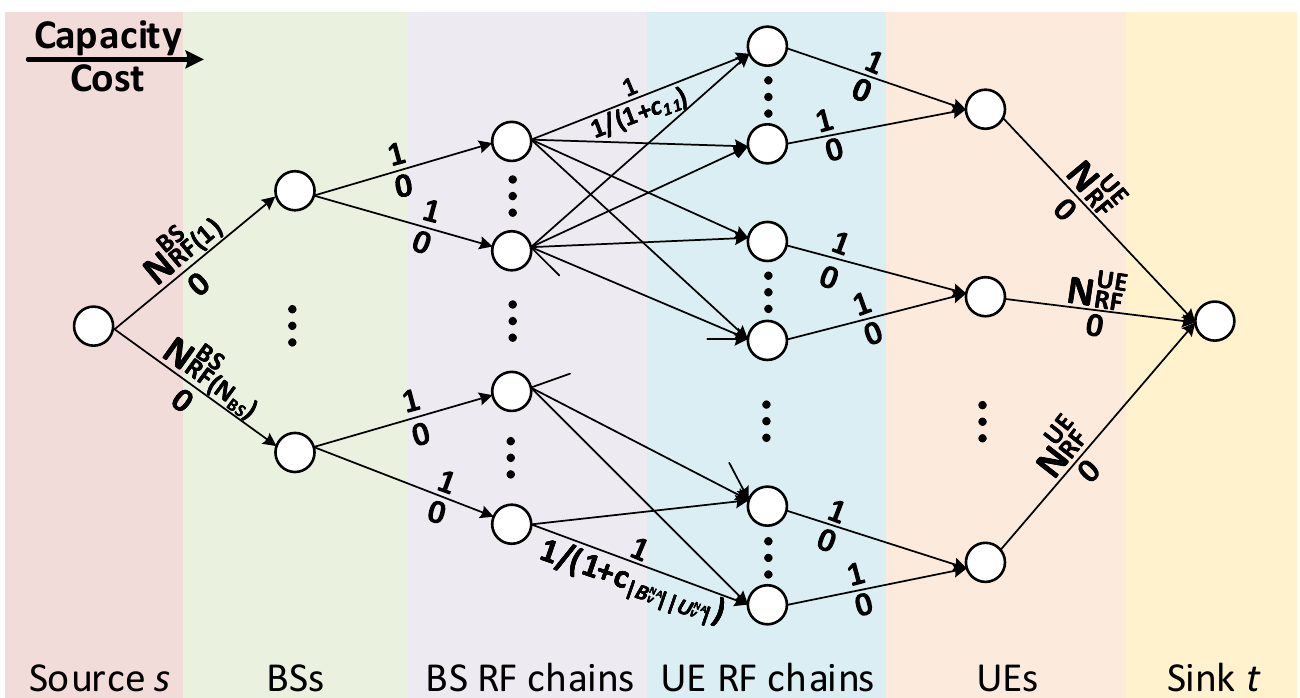}
    \end{center}
    \vspace{-4mm}
    \caption{Directed graph $G$ for the min-cost network flow problem. Zero capacity links are not included in the figure.}
    \vspace{-4mm}
    \label{fig:graph}
\end{figure}
The right-hand sides in (\ref{eq:min_cost_const_1})-(\ref{eq:min_cost_const_3}) represent supply/demand of each vertex in $V$. Both supplies/demands and capacities $c(m,n),\forall(m,n)\in E$, are integers. It was proved that the min-cost network flow problem always have an integer solution when edge capacities and vertex supplies/demands are integer \cite{Ahuja:network}. Thus, there is an integer optimal solution $\mathbf{x}_s^*$ to the relaxation of (\ref{eq:sum_rate}).




%
\bibliographystyle{IEEEtran}
\bibliography{references}

\end{document}